\documentclass[12pt, draftclsnofoot, onecolumn]{IEEEtran}

\usepackage{amsmath}
\usepackage{amssymb}
\usepackage{amsthm}
\usepackage{cite}

\usepackage{graphicx}
\usepackage{enumerate}
\usepackage{setspace}
\usepackage{subfigure}
\usepackage{color}
\usepackage{multirow}
\usepackage{slashbox}
\usepackage{accents}
\newlength{\dhatheight}
\newcommand{\doublehat}[1]{%
    \settoheight{\dhatheight}{\ensuremath{\hat{#1}}}%
    \addtolength{\dhatheight}{-0.35ex}%
    \hat{\vphantom{\rule{1pt}{\dhatheight}}%
    \smash{\hat{#1}}}}


\newtheorem{theorem}{Theorem}
\newtheorem{definition}{Definition}
\newtheorem{lemma}{Lemma}

\title{Outage-based ergodic link adaptation\\ for fading channels with delayed CSIT}

\author{Jung Hyun Bae, Jungwon Lee, Inyup Kang\\
Mobile Solutions Lab\\
Samsung US R$\&$D Center\\
San Diego, CA, USA\\
Email: jungbae@umich.edu, jungwon@alumni.stanford.edu, inyup.kang@samsung.com}

\begin{document}
\maketitle
\begin{abstract}
Link adaptation in which the transmission data rate is dynamically adjusted according to channel variation is often used to deal with time-varying nature of wireless channel. When channel state information at the transmitter (CSIT) is delayed by more than channel coherence time due to feedback delay, however, the effect of link adaptation can possibly be taken away if this delay is not taken into account. One way to deal with such delay is to predict current channel quality given available observation, but this would inevitably result in prediction error. In this paper, an algorithm with different view point is proposed. By using conditional cdf of current channel given observation, outage probability can be computed for each value of transmission rate $R$. By assuming that the transmission block error rate (BLER) is dominated by outage probability, the expected throughput can also be computed, and $R$ can be determined to maximize it. The proposed scheme is designed to be optimal if channel has ergodicity, and it is shown to considerably outperform conventional schemes in a Rayleigh fading channel model. 
\end{abstract}


\section{Introduction}
\label{sec:intro}
A time-varying nature is one of the most important properties of wireless channel. To ensure efficient and reliable communication in time-varying fading channel, the current wireless standards support link adaptation in which the transmission data rate is dynamically adjusted according to channel variation~\cite[Ch. 10]{SeToBa09}. An apparent challenge with link adaptation is delay of channel state information at the transmitter (CSIT). If the receiver observes channel at some point and feeds it back to the transmitter, then CSIT would inevitably suffer from delay which can be several transmission blocks.\footnote{In current wireless standards~\cite[Ch. 10]{SeToBa09}, rate determination is done at the receiver side, and the rate request to the transmitter is delayed. In this paper, we assume that rate determination is done at the transmitter side with delayed CSIT. Note that these two scenarios are equivalent.} Unless channel coherence time is larger than this delay, CSIT delay can result in incorrect CSIT which possibly take away most of benefits of link adaptation. \\
\indent
One way of dealing with CSIT delay is to predict current channel quality given delayed observation (feedback). When channel statistics is known, the optimal predictor which minimizes mean-square-error can be computed by using conditional mean~\cite[Thm. 3.A.1]{KaSa00}. In practice, linear predictors which are also optimal when channel is Gaussian distributed are often used, and \cite{ToQu08, CuLu10, WaZh11} show several examples of this approach. One thing to note is that prediction of current channel can never be perfect due to the random nature of current channel, i.e., current channel is a random variable conditioned on observation. With the approach of predicting current channel, this random nature is source of incorrect CSIT and is undesirable but unavoidable. \\
\indent
In this paper, we propose a scheme with completely different view point from predicting channel quality. In essence, the proposed scheme finds the transmission data rate which maximizes the expected throughput by using the cdf of current channel conditioned on observation. Since it does not attempt to predict channel quality, it is obviously free from error of such prediction. This does not mean that every transmission of the proposed scheme would result in successful decoding since the determined data rate may correspond to non-zero error rate of the transmitted code block, i.e., block error rate (BLER). Instead, it determines the `best' BLER which maximizes the throughput. If ergodicity holds for a random process defined by joint distribution of channels, then the proposed scheme maximizes the actual long-term throughput, and hence, it is optimal over all possible link adaptation schemes. Such assumption of ergodicity is thought to hold in many models as discussed in~\cite[Ch. 5]{TsVi05}, and numerical evaluation for a specific model will be seen later in this paper. Although performance of the proposed scheme is guaranteed by the aforementioned optimality, it will also be seen that simulation results show that the proposed scheme significantly outperforms conventional ones.\\
\indent 
An important assumption used to derive the proposed scheme is that code block error is resulted from an outage event. An outage event in which the channel quality does not support the transmission data rate is often used to analyze performance in slow fading channels with no CSIT~\cite[Ch. 5]{TsVi05}, and outage probability is determined by the transmission data rate and the cdf of channel. The situation with delayed CSIT is very similar to no CSIT case, and the only difference is that the cdf of channel varies depending on observation. If channel coding is optimal with correct CSIT, then the expected BLER must be determined by outage probability with delayed CSIT. Since this is not true in any practical scenario, outage probability alone does not determine the expected BLER, but such assumption is still quite realistic given the fact that practical channel coding is near optimal and major source of error is lack of CSIT rather than sub-optimal coding when CSIT is delayed. Once outage probability is found, the expected throughput of the transmission data rate can also be found by using the above assumption of the expected BLER being determined by outage probability, and hence, the rate which maximizes the throughput can be determined. \\
\indent
The remainder of the paper is organized as follows. Section~\ref{sec:ch} describes channel model used in this paper. Section~\ref{sec:out} describes the proposed scheme in detail and presents performance evaluation as well as numerical evaluation of ergodicity which is closely related to optimality of the proposed scheme. Section~\ref{sec:ext} discusses possible generalizations of the proposed scheme to more complicated cases, and Section~\ref{sec:con} concludes the paper.
\section{Channel model}
\label{sec:ch}
Consider the following single-input single-output (SISO) block flat fading channel model corresponding to the $n$th symbol of the $i$th code block with channel output $y_i[n]$ at the receiver, channel input $x_i[n]$ from the transmitter, background noise $z_i[n]$, and the channel $h_i$. 
\begin{equation}
\label{eq:ch}
y_i[n] = \sqrt{P}h_ix_i[n] + z_i[n] \text{ for } n = 1,...,N,
\end{equation}
where $N$ is the code block length
We assume that background noise $z_i[n] \sim \mathcal{CN}(0,1)$ is iid. $\{h_i\}_{i=1}^\infty$ is Rayleigh fading process which means that $h_i\sim  \mathcal{CN}(0,1)$, and $P$ represents signal-to-noise ratio (SNR). We assume that $\{h_i\}_{i=1}^\infty$ is jointly Gaussian. One example of such process can be found in Clarke's model~\cite[Ch. 2]{TsVi05} which satisfies $E[h_ih^*_{i+l}]=J_0(2\pi f_D  lT_b  )$ for all integer $l$ where $f_D$ is Doppler frequency, $T_b$ is code block duration, and $J_0$ is a zeroth-order Bessel function of the first kind.\\
\indent
For the code block $i$, we assume that the maximum achievable transmission data rate is the same as the capacity of $h_i$ which is 
\begin{equation}
C_i= \log_2 (1+P|h_{i}|^2 ).
\end{equation}
Note that this assumption is not true unless length $N$ of each code block approaches infinity. For the ease of exposition, however, we will keep this assumption throughout the paper.  We also assume that CSIT delay is $l_d$ code block duration throughout the paper, i.e., CSIT available at the $i$th code block corresponds to the channel at the $(i-l_d)$th code block.\\
\indent
Although we are assuming one of the simplest time-varying fading channel model here, the proposed scheme can be generalized to more complicated cases such as with multiple antennas and/or with fading within a code block. These generalizations will be discussed in section~\ref{sec:ext}.  
\section{Outage-based ergodic link adaptation}
\label{sec:out}
\subsection{Description of the scheme}
\label{sec:al}
Consider the $i$th code block. This code block would result in error if and only if the attempted transmission data rate $R$ is greater than $C_i$, i.e., an outage occurs. As mentioned earlier, $h_i$ can be viewed as a random variable conditioned on available observations when there is CSIT delay, and hence, outage probability can be computed for each value of $R$. This outage probability can be used to compute the expected throughput at the $i$th code block. To compute outage probability, we first need to identify distribution of $h_i$ conditioned on observation. To do that, we need the following lemma. 
\begin{lemma}~\cite[Chap. 3.C]{KaSa00}
\label{lem:gau}
If $[X^T, Y^T]^T$ is a Gaussian random vector with real numbers, then given $Y=y$,
\begin{equation}
X \sim N(E[X|Y=y],C_{X|Y}),
\end{equation}
where $C_{X|Y}=C_X-AC_{YX}$, $E[X|Y=y]=A(y-m_Y)+m_X$, $A$ solves $AC_Y=C_{XY}$, $m_X=E[X]$, $C_{XY}=E[(X-m_x)(Y-m_Y)^T]$, and $C_X=C_{XX}$. 
\end{lemma}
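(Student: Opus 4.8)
The plan is to prove this via the orthogonality principle of linear estimation, exploiting the defining property of jointly Gaussian vectors that uncorrelated jointly Gaussian components are independent. First I would form the affine estimator $\hat{X} = A(Y - m_Y) + m_X$ and the associated error $e = X - \hat{X}$, with the matrix $A$ left to be determined. The goal is to choose $A$ so that $e$ is uncorrelated with $Y$; imposing $E[e(Y - m_Y)^T] = 0$ and expanding gives $C_{XY} - A C_Y = 0$, which is precisely the normal equation $A C_Y = C_{XY}$ stated in the lemma.

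The key step is then to upgrade uncorrelatedness to independence. Since $[X^T, Y^T]^T$ is jointly Gaussian and $e$ is an affine transformation of this vector, the stacked vector $[e^T, Y^T]^T$ is also jointly Gaussian; for jointly Gaussian vectors a zero cross-covariance forces statistical independence, so $e$ is independent of $Y$. Writing $X = \hat{X} + e = A(Y - m_Y) + m_X + e$ and conditioning on $Y = y$, the term $A(y - m_Y) + m_X$ becomes a deterministic vector while $e$ retains its unconditional (Gaussian) law. Hence $X \mid Y = y$ is Gaussian with mean $A(y - m_Y) + m_X = E[X \mid Y = y]$, as claimed.

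It remains to identify the conditional covariance with the expression $C_X - A C_{YX}$. I would compute $C_{X|Y} = \mathrm{Cov}(e) = E[(X - \hat{X})(X - \hat{X})^T]$ and expand into the four terms $C_X - A C_{YX} - C_{XY} A^T + A C_Y A^T$. Substituting the normal equation $A C_Y = C_{XY}$ collapses the last term, $A C_Y A^T = C_{XY} A^T$, which cancels the third term and leaves exactly $C_X - A C_{YX}$.

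The main obstacle is the independence step: the whole argument rests on the jointly Gaussian assumption being used to promote \emph{uncorrelated} to \emph{independent}, without which $e$ and $Y$ need not decouple and the conditional law would not simplify to a shift of a fixed Gaussian. If one only assumes second-order statistics, the same computation yields the best \emph{linear} estimator and its error covariance, but not the conditional distribution itself. A minor point worth verifying is that $A C_Y = C_{XY}$ is solvable: when $C_Y$ is invertible one has $A = C_{XY} C_Y^{-1}$ explicitly, and when $C_Y$ is singular any solution $A$ yields the same $\hat{X}$ and $\mathrm{Cov}(e)$ almost surely, so the statement remains unambiguous.
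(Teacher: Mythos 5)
Your proof is correct: the orthogonality-principle construction (choosing $A$ from the normal equation $AC_Y = C_{XY}$ so that $e = X - A(Y-m_Y) - m_X$ is uncorrelated with $Y$), the upgrade from uncorrelatedness to independence via joint Gaussianity of $[e^T, Y^T]^T$, and the covariance expansion $C_X - AC_{YX} - C_{XY}A^T + AC_YA^T = C_X - AC_{YX}$ after substituting $AC_YA^T = C_{XY}A^T$ are all sound, and your remarks on the singular-$C_Y$ case correctly note that any solution $A$ yields the same estimator and error covariance. The paper itself gives no proof --- Lemma~\ref{lem:gau} is imported verbatim from the cited reference \cite{KaSa00} --- and your argument is precisely the standard innovations/orthogonality proof found there, so your proposal matches the intended derivation rather than diverging from it.
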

Let $h_{obs(i)}$ be the vector of channel observations used to determine the rate for the $i$th code block. Let $a_{re}$ be real part of $a$, and $a_{im}$ be imaginary part of $a$ for a complex scalar or vector $a$. Then, we can set $X=(h_{i,re},h_{i,im})^T$ and $Y=(h^T_{obs(i),re},h^T_{obs(i),im})^T$, and apply Lemma~\ref{lem:gau} to find the conditional distribution of $h_i$. To simplify analysis, we consider $h_{obs(i)}=h_{i-l_d}$ which means only single observation is used. Note that the number of channel observations in $h_{obs(i)}$ only affects calculation of $E[X|Y=y]$ and $C_{X|Y}$, and the following analysis directly carries over to any number of observations in $h_{obs(i)}$. \\
\indent
With restriction of single observation, the only thing which needs to be specified is correlation between $h_i$ and $h_{i-l_d}$. Let $C=E[h_{i}h^*_{i-l_d}]$. In Clarke's model, $C=J_0(2\pi f_D  l_dT_b  )$. Given $h_{i-l_d}$,
\begin{equation}
h_{i} \sim \mathcal{CN} (Ch_{i-l_d}, 1-C^2).
\end{equation}
Outage probability $P_{out}(R)$ of the $i$th code block for given $R$ can be expressed as 
\begin{subequations}
\begin{eqnarray}
P_{i,out}(R)&=& P \{ C_{i}<R | h_{i-l_d} \}\\
          &=& P\Big\{|h_{i}|^2<\frac{2^R-1}{P}\Big| h_{i-l_d}\Big\}.
\end{eqnarray}
\end{subequations}
Note that $|h_{i}|^2$ given $h_{i-l_d}$ has noncentral chi-squared distribution with 2 degrees of freedom, and its cdf is given as 
\begin{equation}
\label{eq:cdf1}
P\{|h_{i}|^2<x|h_{i-l_d} \}=1-Q_1\bigg(  \sqrt{\frac{2C|h_{i-l_d}|^2}{1-C^2}}, \sqrt{\frac{2x}{1-C^2}}\bigg),
\end{equation}
where $Q_1$ is a Marcum Q-function defined as
\begin{equation}
Q_1(a,b)=\int_b^\infty x\exp\Big(-\frac{x^2+a^2}{2}\Big)I_0(ax)dx,
\end{equation}
where $I_0$ is a modified Bessel function of order 0. Therefore,
\begin{equation}
P_{i,out}(R)=1-Q_1\bigg(  \sqrt{\frac{2C|h_{i-l_d}|^2}{1-C^2}}, \sqrt{\frac{2(2^R-1)}{P(1-C^2)}}\bigg).
\end{equation}
We can also express the expected throughput $TP_i(R)$ of the $i$th code block for given $R$ as
\begin{subequations}
\label{eq:tp}
\begin{eqnarray}
TP_i(R)&=&R(1-P_{i,out}(R))\\
       &=&RQ_1\bigg(  \sqrt{\frac{2C|h_{i-l_d}|^2}{1-C^2}}, \sqrt{\frac{2(2^R-1)}{P(1-C^2)}}\bigg).
\end{eqnarray}
\end{subequations}
The proposed scheme is to choose $R$ which is the solution of $\max_R TP_i(R)$ as the transmission data rate of the $i$th code block. It can be seen that $TP_i(R)$ becomes the same as the case where $|h_i|$ is Rayleigh distributed if $C=0$. In other words, if channel is uncorrelated, then delayed CSIT fails to give any information about current channel. Hence, the proposed scheme as well as conventional methods of predicting channel would be beneficial only if there is considerable correlation in the channel. Performances of the proposed scheme with respect to various channel correlations will be presented later in this paper.\\
\indent 
As mentioned earlier, an important aspect of the proposed scheme is getting rid of the process of predicting current channel. Validity of this approach lies in ergodicity of the channel. It can be easily seen that the transmitter allocates the same rate for code block $i$'s with which $|h_{i-l_d}|$'s are the same. If we only consider the actual throughput of code blocks with those $i$'s, then it must converge to $\max_R TP_i(R)$ given ergodicity of the channel, and the proposed scheme indeed maximizes the actual long-term throughput. 
\subsection{Determination of search interval}
\label{sec:int}
$TP_i(R)$ given in~\eqref{eq:tp} involves Marcum Q-function which does not have a closed form expression, and this implies that the solution of $\max_R TP_i(R)$ cannot be found analytically. In the case when $TP_i(R)$ is convex, well known exiting algorithms can be applied to find the solution of $\max_R TP_i(R)$~\cite{BoVa09}. To the authors' best knowledge, however, $TP_i(R)$ is not known or can be proven to be convex, and hence, we consider brute-force search. Since possible values of $R$ are any positive numbers, the optimal search for the solution of $\max_R TP_i(R)$ can be prohibitively difficult. To reduce complexity of search, we may first transform the original continuous search problem to a discrete search problem by quantizing search space of $R$ which is non-negative real line. This would result in loss of optimality, but even after that, complexity of search would still be prohibitive unless finite search interval is established. In this section, we will discuss of ways to establish such interval.\\
\indent
Since we need to make sure that the solution of the original optimization problem $\max_R TP_i(R)$ belongs to the reduced search interval $[R_L,R_U]$, $R_L$ and $R_U$ must satisfy the following conditions. First, $\frac{dTP(R)}{dR}\geq 0$ for all $R<R_L$. Second, $\frac{dTP(R)}{dR}\leq 0$ for all $R>R_U$. Hence, our objective is to find hopefully large $R_U$ and small $R_L$ satisfying these conditions. The following theorem gives the interval $[R_L,R_U]$ which the solution of $\max_R TP_i(R)$ always belongs to. Let $\alpha=\sqrt{\frac{2C|h_{i-l_d}|^2}{1-C^2}}$, $\beta=\sqrt{\frac{2}{P(1-C^2)}}$ and $\gamma(R)=\sqrt{2^R-1}$.
\begin{theorem}
\label{thm:int}
The solution of $\max_R TP_i(R)$ always belongs to the interval $[R_L,R_U]$ where $R_U=\max\{   \hat{R},\log_2 (\frac{\alpha^2}{\beta^2}+1 )\}$ and $R_L=\doublehat{R}$. $\hat{R}$ is the unique solution of $R2^{R-1}=\frac{1+\alpha\sqrt{\frac{\pi}{2}}}{\beta^2\ln2}$, and $\doublehat{R}$ is the unique solution of $R2^{R-1}=\frac{1}{\beta^2\ln2}$.
\end{theorem}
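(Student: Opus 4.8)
The plan is to locate the maximizer by controlling the sign of $\frac{dTP_i(R)}{dR}$, which is exactly what the two stated conditions on $R_L$ and $R_U$ encode. First I would differentiate $TP_i(R)=RQ_1(\alpha,\beta\gamma(R))$. Using $\frac{\partial}{\partial b}Q_1(a,b)=-b\,e^{-(a^2+b^2)/2}I_0(ab)$ together with $\gamma(R)\gamma'(R)=2^{R-1}\ln 2$, the derivative collapses to
\begin{equation}
\frac{dTP_i(R)}{dR}=Q_1(\alpha,\beta\gamma(R))-R\beta^2 2^{R-1}\ln 2\cdot e^{-(\alpha^2+\beta^2\gamma(R)^2)/2}I_0(\alpha\beta\gamma(R)).
\end{equation}
Thus the sign is governed by comparing $Q_1$ against $R\beta^2 2^{R-1}\ln2$ times $e^{-(\alpha^2+\beta^2\gamma^2)/2}I_0(\alpha\beta\gamma)$. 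Since $R\mapsto R2^{R-1}$ is continuous and strictly increasing from $0$ to $\infty$ on $(0,\infty)$, each defining equation has a unique positive root and $\doublehat{R}<\hat{R}$; the whole argument reduces to sandwiching $Q_1$ between two multiples of $e^{-(\alpha^2+\beta^2\gamma^2)/2}I_0(\alpha\beta\gamma)$.

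For the lower endpoint I would prove the clean bound $Q_1(a,b)\ge e^{-(a^2+b^2)/2}I_0(ab)$ valid for all $b\ge0$. Setting $f(b)=Q_1(a,b)-e^{-(a^2+b^2)/2}I_0(ab)$, one checks $f(b)\to0$ as $b\to\infty$ and, using $I_0'=I_1$, that $f'(b)=-a\,e^{-(a^2+b^2)/2}I_1(ab)\le0$, so $f$ is nonincreasing and hence $f(b)\ge0$. Applying this with $a=\alpha$, $b=\beta\gamma(R)$ bounds the derivative below by $e^{-(\alpha^2+\beta^2\gamma^2)/2}I_0(\alpha\beta\gamma)\,(1-R\beta^2 2^{R-1}\ln2)$, which is nonnegative precisely when $R2^{R-1}\le\frac{1}{\beta^2\ln2}$, i.e. when $R\le\doublehat{R}$. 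This verifies the first condition with $R_L=\doublehat{R}$.

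For the upper endpoint, the role of the term $\log_2(\frac{\alpha^2}{\beta^2}+1)$ in $R_U$ is to guarantee $\beta\gamma(R)\ge\alpha$, i.e. to place the Marcum arguments in the regime $b\ge a$. There I would establish the matching bound $Q_1(a,b)\le(1+a\sqrt{\pi/2})\,e^{-(a^2+b^2)/2}I_0(ab)$ for $b\ge a$ in three steps: (i) the monotonicity estimate $I_0(ax)\le I_0(ab)e^{a(x-b)}$ for $x\ge b$, which follows from $\frac{d}{dx}\ln I_0(ax)=a\,I_1(ax)/I_0(ax)\le a$ since $I_1/I_0<1$; (ii) substituting this into the integral for $Q_1$, completing the square, and setting $u=x-a$ to get $Q_1(a,b)\le I_0(ab)e^{-ab}\bigl(e^{-(b-a)^2/2}+a\int_{b-a}^\infty e^{-u^2/2}du\bigr)$; and (iii) the sharp tail estimate $\int_s^\infty e^{-u^2/2}du\le\sqrt{\pi/2}\,e^{-s^2/2}$ for $s\ge0$, obtained by writing the integral as $e^{-s^2/2}\int_0^\infty e^{-v^2/2}e^{-vs}dv$ and dropping $e^{-vs}\le1$. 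Recombining exponents, so that both resulting terms share the factor $e^{-(a^2+b^2)/2}$ via $e^{-ab}e^{-(b-a)^2/2}=e^{-(a^2+b^2)/2}$, yields the claimed bound. Then for $R\ge\hat{R}$ we have $R\beta^2 2^{R-1}\ln2\ge1+\alpha\sqrt{\pi/2}$, so the displayed derivative is $\le0$; taking $R>R_U=\max\{\hat{R},\log_2(\frac{\alpha^2}{\beta^2}+1)\}$ ensures both $b\ge a$ and this inequality, giving the second condition.

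I expect the upper bound, specifically step (iii), to be the crux: the naive estimate $\int_{b-a}^\infty e^{-u^2/2}du\le\sqrt{\pi/2}$ only produces a correction proportional to $e^{-ab}$, which decays too slowly relative to $e^{-(a^2+b^2)/2}$ and would make the bound fail for large $R$; the exponentially decaying tail estimate is what restores the correct $e^{-(a^2+b^2)/2}I_0(ab)$ form and hence a finite, explicit $R_U$. With both endpoints in hand, $TP_i$ is nondecreasing on $(0,R_L)$ and nonincreasing on $(R_U,\infty)$, so every maximizer of $\max_R TP_i(R)$ lies in $[R_L,R_U]$.
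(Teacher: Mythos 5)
Your proof is correct, and its skeleton is the paper's: compute $\frac{dTP_i(R)}{dR}=Q_1(\alpha,\beta\gamma(R))-R2^{R-1}(\ln 2)\beta^2 e^{-(\alpha^2+\beta^2\gamma(R)^2)/2}I_0(\alpha\beta\gamma(R))$ (your expression matches the paper's exactly), then control its sign by sandwiching $Q_1(a,b)$ between $e^{-(a^2+b^2)/2}I_0(ab)$ and $\bigl(1+a\sqrt{\pi/2}\bigr)e^{-(a^2+b^2)/2}I_0(ab)$, the upper bound only for $b\ge a$, which is exactly what forces the $\log_2\bigl(\frac{\alpha^2}{\beta^2}+1\bigr)$ term in $R_U$. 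The genuine difference is provenance rather than structure: the paper imports both Marcum-Q bounds from the literature ([CoFe02]) and the single-exponential bound $\text{erfc}(x)\le e^{-x^2}$ from [ChCoMi11], whereas you re-derive all of them from first principles --- the lower bound via $f(b)=Q_1(a,b)-e^{-(a^2+b^2)/2}I_0(ab)$ with $f'(b)=-a\,e^{-(a^2+b^2)/2}I_1(ab)\le 0$ and $f(\infty)=0$, and the upper bound via $I_0(ax)\le I_0(ab)e^{a(x-b)}$ (from $I_1/I_0\le 1$), completing the square, and the Gaussian tail estimate $\int_s^\infty e^{-u^2/2}du\le\sqrt{\pi/2}\,e^{-s^2/2}$, which is the cited erfc bound in disguise (set $s=\sqrt{2}x$). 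Your recombination $e^{-ab}e^{-(b-a)^2/2}=e^{-(a^2+b^2)/2}$ reproduces the paper's final inequalities verbatim, so the endpoints $\hat{R}$, $\doublehat{R}$, and the max defining $R_U$ come out identical. Your version buys self-containedness plus two details the paper leaves implicit: the strict monotonicity of $R\mapsto R2^{R-1}$ (hence existence and uniqueness of $\hat{R}$ and $\doublehat{R}$, and $\doublehat{R}\le\hat{R}$ so the interval is nonempty), and a correct diagnosis that bounding the tail integral by the constant $\sqrt{\pi/2}$ would leave a term decaying only like $e^{-ab}$, too slowly to yield a finite closed-form $R_U$ --- pinpointing why the exponentially decaying tail bound is essential. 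What the paper's citation-based route buys is brevity and the numerical tightness evidence from [CoFe02], which it later exploits when discussing looseness of $[R_L,R_U]$ at high correlation.
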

\begin{proof}
$TP_i(R)$ can be written as follows.
\begin{subequations}
\begin{eqnarray}
TP_i(R)&=& R\int_{\beta\gamma(R)}^\infty x\exp\Big(-\frac{x^2+\alpha^2}{2}\Big)I_0(\alpha x)dx\\
     &=&R\bigg(1-\int_0^{\beta\gamma(R)} x\exp\Big(-\frac{x^2+\alpha^2}{2}\Big)I_0(\alpha x)dx\bigg).
\end{eqnarray}
\end{subequations}
Hence,
\begin{subequations}
\begin{eqnarray}
\frac{dTP_i(R)}{dR}&=& Q_1(\alpha,\beta\gamma(R) )-R\frac{d\beta\gamma(R)}{dR}\beta\gamma(R) \exp\Big(-\frac{\beta^2\gamma(R)^2+\alpha^2}{2}\Big)I_0(\alpha \beta\gamma(R))\\
     &=&Q_1(\alpha,\beta\gamma(R) )-R\frac{\beta 2^R\ln2}{2\gamma(R)}\beta\gamma(R) \exp\Big(-\frac{\beta^2\gamma(R)^2+\alpha^2}{2}\Big)I_0(\alpha \beta\gamma(R))\\
     &=&Q_1(\alpha,\beta\gamma(R) )-R2^{R-1}(\ln2)\beta^2\exp\Big(-\frac{\beta^2\gamma(R)^2+\alpha^2}{2}\Big)I_0(\alpha \beta\gamma(R)).
\end{eqnarray}
\end{subequations}
The above expression still involves $Q_1(\cdot)$ and $I_0(\cdot)$ which do not have closed form expressions. Hence, we consider bounds on these functions to obtain an analytical solution on the search interval.\\
\indent
 To find $R_U$, let us first upper bound $\frac{dTP_i(R)}{dR}$ by upper bounding $Q_1(\cdot)$. Bounds on Marcum Q-function have been studied considerably as seen in~\cite{CoFe02} and references therein. In~\cite{CoFe02}, the upper bound of $Q_1(\alpha,\beta\gamma(R))$ is given for $\beta\gamma(R)>\alpha$ as 
\begin{equation}
Q_1(\alpha,\beta\gamma(R)) \leq I_0(\alpha \beta\gamma(R))\bigg(\exp\Big(-\frac{\beta^2\gamma(R)^2+\alpha^2}{2}\Big)+\frac{\alpha}{\exp(\alpha \beta\gamma(R))}\sqrt{\frac{\pi}{2}}\text{erfc}\Big(\frac{\beta\gamma(R)-\alpha}{\sqrt{2}}\Big)   \bigg),
\end{equation}
which is shown to be quite tight by numerical evaluation in~\cite{CoFe02}. Here, $\text{erfc}(x)=\frac{2}{\sqrt{\pi}}\int_x^\infty e^{-t^2} dt,$\\$x\geq 0$, and the fact that it does not have a closed form expression again becomes problematic. To proceed further, we need to upper bound $\text{erfc}(x)$. To simplify analysis, we consider a upper bound of $\text{erfc}(x)$ which consists of a single exponential term. Well known Chernoff bound which is given as $\text{erfc}(x)\leq 2\exp(-x^2)$~\cite{ChCoMi11} is an example of such bound.  In~\cite{ChCoMi11}, it is shown that $\exp(-x^2)$ is the best upper bound of $\text{erfc}(x)$ among general single-term exponential-type bounds $a\exp(-bx^2)$, where $a,b>0$. By using this, we get
\begin{equation}
Q_1(\alpha,\beta\gamma(R)) \leq I_0(\alpha \beta\gamma(R))\bigg(\exp\Big(-\frac{\beta^2\gamma(R)^2+\alpha^2}{2}\Big)+\alpha\sqrt{\frac{\pi}{2}}\exp\Big(-\frac{\beta^2\gamma(R)^2+\alpha^2}{2}\Big)   \bigg).
\end{equation}
By upper bounding $Q_1(\alpha,\beta\gamma(R))$, we get
\begin{equation}
\frac{dTP_i(R)}{dR} \leq I_0(\alpha \beta\gamma(R))\exp\Big(-\frac{\beta^2\gamma(R)^2+\alpha^2}{2}\Big)\bigg(1+\alpha\sqrt{\frac{\pi}{2}}-R2^{R-1}\beta^2\ln2 \bigg).
\end{equation}
It can be seen that $R_U$ can be found using the above expression without further bounding $I_0(\cdot)$. Indeed, we can say that $\frac{dTP(R)}{dR}\leq 0$ if $R\geq \hat{R}$ where $\hat{R}$ is the unique solution of $R2^{R-1}=\frac{1+\alpha\sqrt{\frac{\pi}{2}}}{\beta^2\ln2}$. Then, $R_U=\max\{   \hat{R},\log_2 (\frac{\alpha^2}{\beta^2}+1 )\}$.\\
\indent
To find $R_L$, we need to lower bound $TP_i(R)$ by lower bounding $Q_1(\alpha,\beta\gamma(R))$. A tight lower bound is also derived in~\cite{CoFe02}, but its complicated expression makes it hard to analyze, and hence, we use another lower bound given in~\cite{CoFe02} which turns out to be looser by numerical evaluation in~\cite{CoFe02}. The lower bound has the following expression.
\begin{equation}
Q_1(\alpha,\beta\gamma(R))\geq I_0(\alpha \beta\gamma(R))\exp\Big(-\frac{\beta^2\gamma(R)^2+\alpha^2}{2}\Big).
\end{equation}
Then, 
\begin{equation}
\frac{dTP_i(R)}{dR} \geq I_0(\alpha \beta\gamma(R))\exp\Big(-\frac{\beta^2\gamma(R)^2+\alpha^2}{2}\Big)\bigg(1-R2^{R-1}\beta^2\ln2 \bigg).
\end{equation}
We can say that $\frac{dTP_i(R)}{dR}\geq 0$ if $R\leq \doublehat{R}$ where $\doublehat{R}$ is the unique solution of $R2^{R-1}=\frac{1}{\beta^2\ln2}$, and $R_L=\doublehat{R}$.
\end{proof}
What one ideally want from the interval $[R_L,R_U]$ would be dependence of $R_U$ and $R_L$ on observations. It can be seen from Theorem~\ref{thm:int}, however, only $R_U$ depends on $\alpha$. This means that the interval $[R_L,R_U]$ may not be very meaningful. Even with $R_U$, upper bounding erfc$(\cdot)$ with a single exponential term could result in loose upper bound. To check validity of the interval $[R_L,R_U]$ in the actual channel model, we consider the following first order autoregressive channel model which is also called AR(1) model. 
\begin{itemize}
\item Simulation channel model
\begin{subequations}
\label{eq:simch}
\begin{eqnarray}
h_0&=&h_{temp,0},\\
h_i&=&\sqrt{1-C^2}h_{temp,i}+Ch_{i-1},\quad i=1,2,...
\end{eqnarray}
\end{subequations}
where $h_{temp,i} \sim \mathcal{CN}(0,1)$ and iid. 
\end{itemize}
The above model has the following properties.
\begin{subequations}
\begin{eqnarray}
E[h_ih^*_i]&=&1-C^{2i} \xrightarrow{i \rightarrow \infty} 1\\
E[h_ih^*_{i-1}]&=&C(1-C^{2i})\xrightarrow{i \rightarrow \infty} C.
\end{eqnarray}
\end{subequations}
Hence, the above channel model is the same as the one considered in Section~\ref{sec:al} with sufficiently large $i$ if we assume $l_d=1$ without loss of generality.\\
\indent
Table~\ref{tab:size} shows the ratio of the empirical average of $R_U-R_L$ to $\log_2(1+P)$ over $10^4$ realizations of $h_i$ with respect to various $P$ and $C$. 
\begin{table}[!t]\footnotesize
\renewcommand{\arraystretch}{1.3}
\caption{the ratio of the search interval $[R_L,R_U]$ to $\log_2(1+P)$}
\label{tab:size}
\centering
\begin{tabular}{|c||c|c|c|c|c|c|}
 \hline
   \backslashbox{Correlation}{SNR (dB)} & 0 (1)& 4 (1.81) &8 (2.87) &12 (4.07) &16 (5.35) &20 (6.66)  \\
  \hline
  \hline
  0.1 & 0.22 & 0.16 & 0.12 & 0.10 & 0.08 & 0.07\\
  \hline
  0.7 & 0.53 & 0.42 & 0.36 & 0.31 & 0.28 & 0.26\\
  \hline
  0.9 & 0.67 & 0.59 & 0.55 & 0.49 & 0.44 & 0.43\\
  \hline
  0.95 & 0.75 & 0.69 & 0.67 & 0.60 & 0.54 & 0.51\\
  \hline   
\end{tabular}
\end{table}
The value of $\log_2(1+P)$ which can be thought as the capacity of the average channel is given as reference in the bracket right next to each value of $P$. It can be seen from Table~\ref{tab:size} that the ratio of $R_U-R_L$ to $\log_2(1+P)$ decreases as $P$ increases, and it is more dramatic with low correlation. $R_U$ is determined by maximum of two terms as in Theorem~\ref{thm:int}, and which term being active affects the results given in Table~\ref{tab:size}. Table~\ref{tab:act} shows the empirical probability of $\hat{R}$ being active for $R_U$ over $10^4$ realizations of $h_i$. It can be seen that $\hat{R}$ is active mostly for low correlation. 
\begin{table}[!t]\footnotesize
\renewcommand{\arraystretch}{1.3}
\caption{Empirical probability of $R_U=\hat{R}$}
\label{tab:act}
\centering
\begin{tabular}{|c||c|c|c|c|c|c|}
 \hline
   \backslashbox{Correlation}{SNR (dB)} & 0& 4&8&12&16&20  \\
  \hline
  \hline
  0.1 & 1&1& 1&1&1&1\\
  \hline
  0.7 & 0.86&0.77&0.67&0.56&0.45&0.38\\
  \hline
  0.9&0.50& 0.42&0.35&0.27&0.22&0.18\\
  \hline
  0.95& 0.33&0.27&0.23&0.17&0.14&0.11\\
  \hline   
\end{tabular}
\end{table}
If $R_U=\hat{R}$, then $R'2^{R_U}-R_U'2^{R_L}=\frac{\alpha\sqrt{\frac{\pi}{2}}}{\beta^2\ln2}$, which implies that $R_U-R_L$ would increases as $P$ increases. The rate of this increase, however, would be slower than that of increase for $\log_2(1+P)$ due to the multiplicative factor $R_U$ and $R_L$ in $R'2^{R_U}-R_U'2^{R_L}=\frac{\alpha\sqrt{\frac{\pi}{2}}}{\beta^2\ln2}$, and it turns out to be much slower as can be seen in Table~\ref{tab:size} for low correlation case.\\
 \indent
High correlation cases show slightly different trends, and it may imply looseness of the interval $[R_L,R_U]$. Intuitively, it would be more likely that the realization of the next channel instance is close to the observation when correlation is high. In that case, search interval must be smaller than low correlation cases, but results in Table~\ref{tab:size} actually show the opposite. Looseness of $R_U$ is possibly a reason for this phenomenon. Note that the upper bound of Marcum Q-function which is used in the proof of Theorem~\ref{thm:int} is valid only for $\beta\gamma(R)>\alpha$. Because of this, $R_U$ is expressed as maximum of two terms, and $R_U$ is likely loose if $R_U=\log_2 (\frac{\alpha^2}{\beta^2}+1 )$. Table~\ref{tab:act} shows that it is more likely that $R_U=\log_2 (\frac{\alpha^2}{\beta^2}+1 )$ when correlation is high, and this can be another reason why the interval $[R_L,R_U]$ does not become small with high correlation. Such possible looseness of the interval $[R_L,R_U]$ can be troublesome especially given the fact that the proposed scheme would likely be meaningful only for sufficiently high correlation, and this means that further optimization of the interval $[R_L,R_U]$ may be needed. Nevertheless, absolute size of the interval $[R_L,R_U]$ is not impractically large according to Table~\ref{tab:size}, and hence, performance evaluation of the proposed scheme in this paper will be based on the search interval in Theorem~\ref{thm:int}.
\subsection{Performance evaluation}
\label{sec:perf}
To evaluate performance of the proposed scheme, we consider the following conventional schemes to compare with. 
\begin{enumerate}
 \item Scheme 1 (A scheme which ignores observation)\\
Since this scheme ignores observation, the transmission data rate of this scheme is fixed throughout whole transmission. This scheme treats each channel instance as a random variable which is circularly symmetric white Gaussian. Then, similar to the proposed scheme, the rate $\tilde{R}_1$ of this scheme is determined to maximize the expected throughput at each instance, and it is the solution of $\max_R \tilde{TP}(R)$, where $\tilde{TP}(R)=R \exp  ( -\frac{2^R-1}{P})$. 
\item Scheme 2 (A scheme which predicts the current channel based on observation)\\
It would be natural to consider a method which minimizes mean square error. The best prediction $\hat{h_i}$ of $h_i$ in that sense is $\hat{h_i}=E[h_i|h_{obs(i)}]$. If we consider the channel model in~\eqref{eq:simch}, then $\hat{h_i}=E[h_i|h_{i-1}]=Ch_{i-1}$. Then, the transmission data rate $\tilde{R}_{2,i}$ must be the capacity of the predicted channel which is $\tilde{R}_{2,i}=\log_2(1+C^2|h_{i-1}|^2)$.
 \end{enumerate}
It can be seen that $\tilde{R}_1$ would be the allocated rate of the proposed scheme when the channel correlation $C=0$, i.e., $\alpha=0$. In fact, the channel is ergodic when $C=0$, and hence, both the proposed scheme and Scheme 1 are optimal in terms of long-term throughput. This also means that performances of these two schemes would not be much different when channel correlation is not high enough. Furthermore, it can be seen from Theorem~\ref{thm:int} that $R_U=R_L$ when $\alpha=0$. Therefore, $\tilde{R}_1=R_U$ where $R_U$ is the solution of $R2^R=\frac{P}{\ln 2}$. \\
\indent
As described above, Scheme 1 and Scheme 2 have closed form solutions for the allocated rate, and hence, only the proposed scheme requires search of the allocated rate. We consider search in the interval $[R_L,R_U]$ with 100 evenly distributed points. Figure~\ref{fig:tp} shows the throughput results for various correlation values $C$. $10^4$ channel realizations are considered with the model in~\eqref{eq:simch} and the assumption of block error resulted from outage event is used for results in Figure~\ref{fig:tp}.
\begin{figure}[ht]
  \subfigure[$C=0.1$]{\label{fig:1}\includegraphics[width=0.5\textwidth]{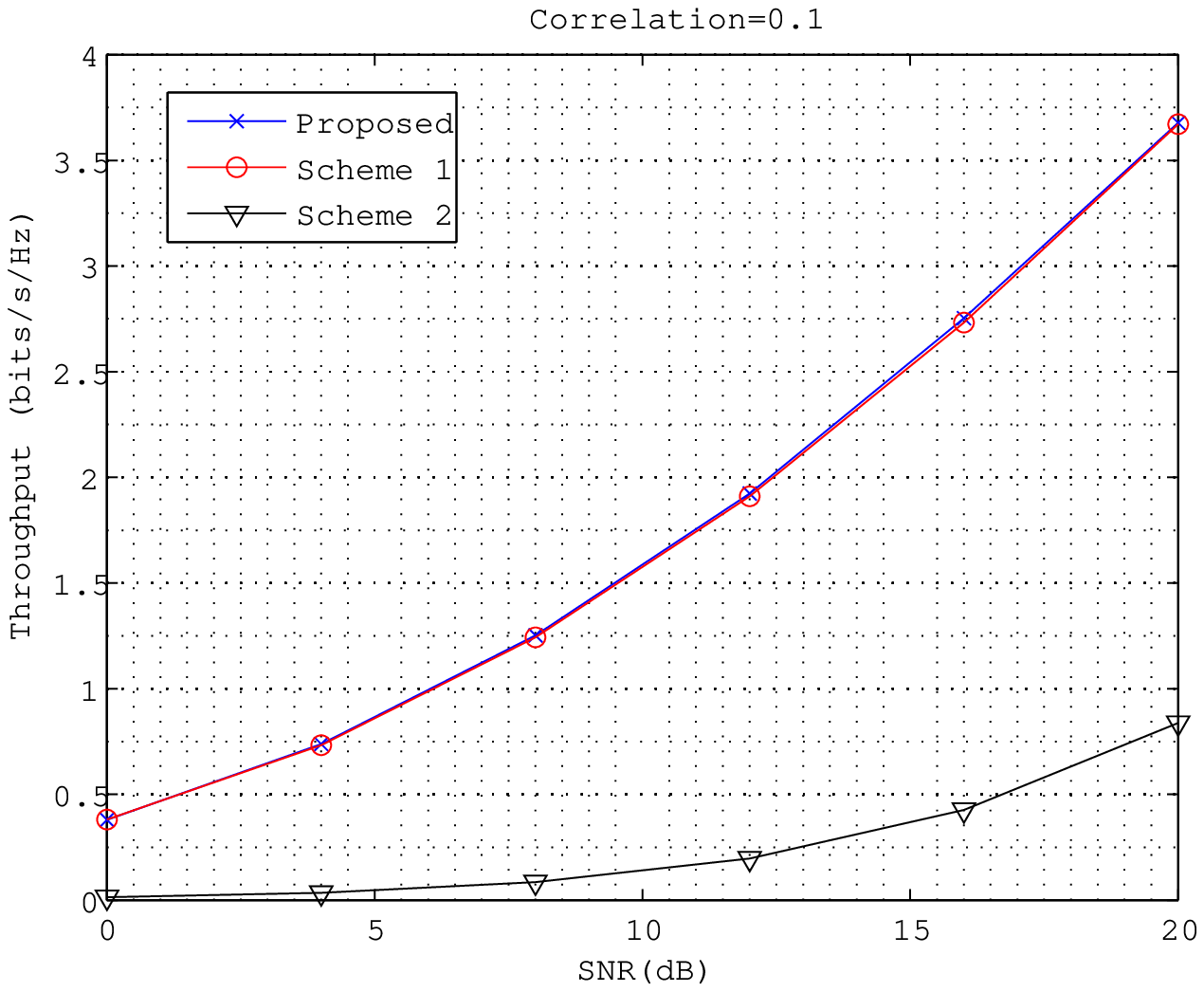}} 
  \subfigure[$C=0.7$]{\label{fig:7}\includegraphics[width=0.5\textwidth]{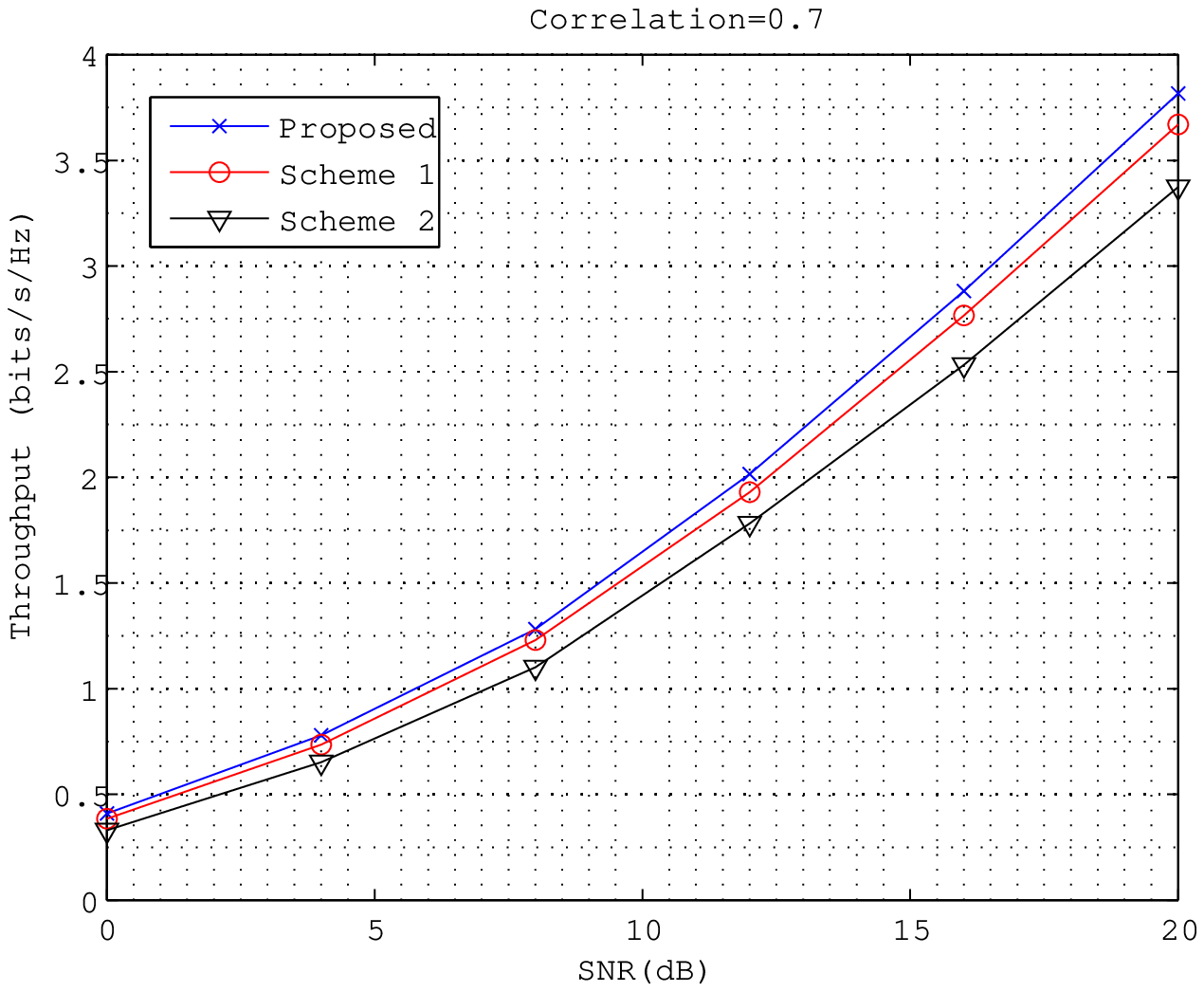}}
  \subfigure[$C=0.9$]{\label{fig:9}\includegraphics[width=0.5\textwidth]{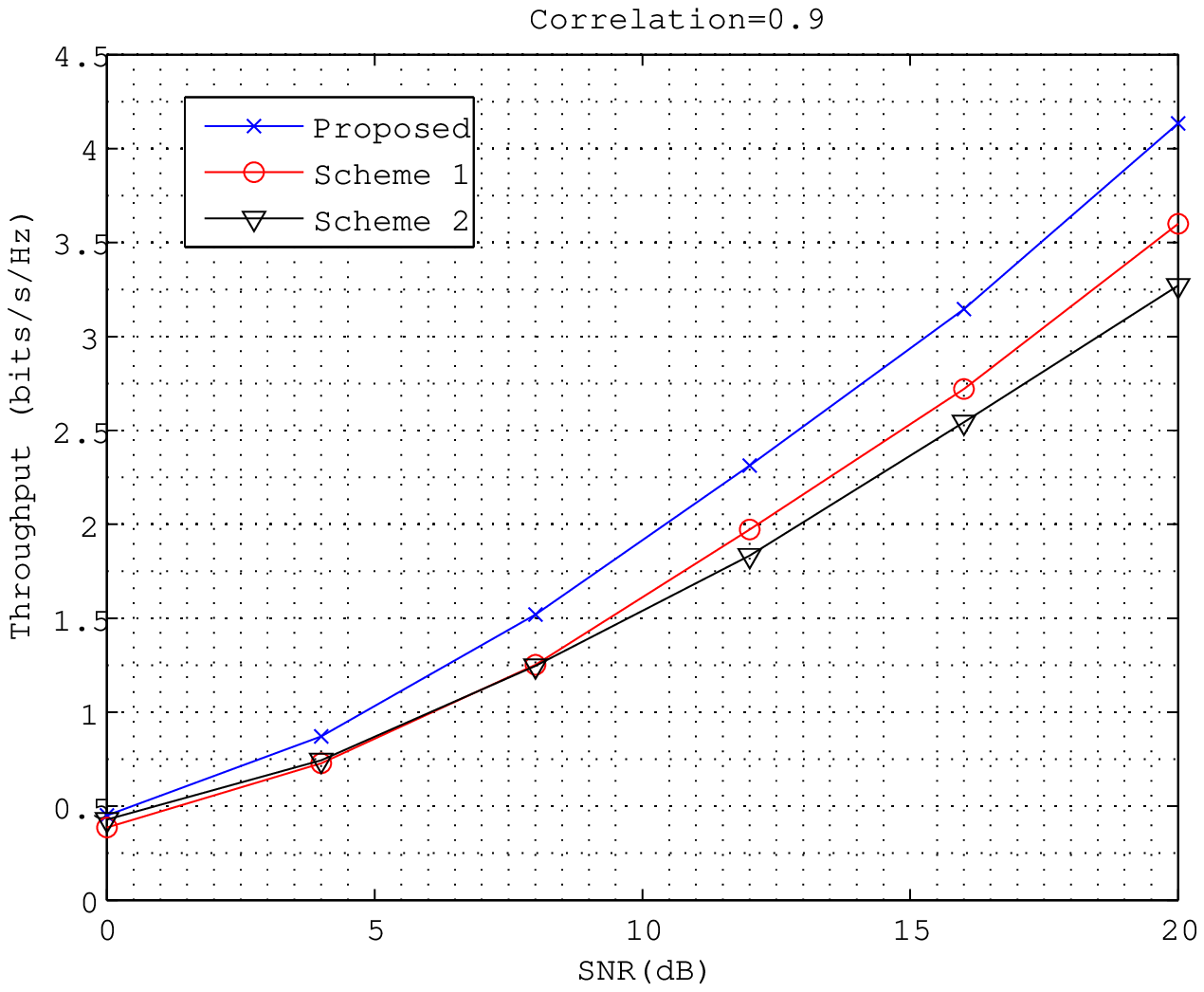}}
  \subfigure[$C=0.95$]{\label{fig:95}\includegraphics[width=0.5\textwidth]{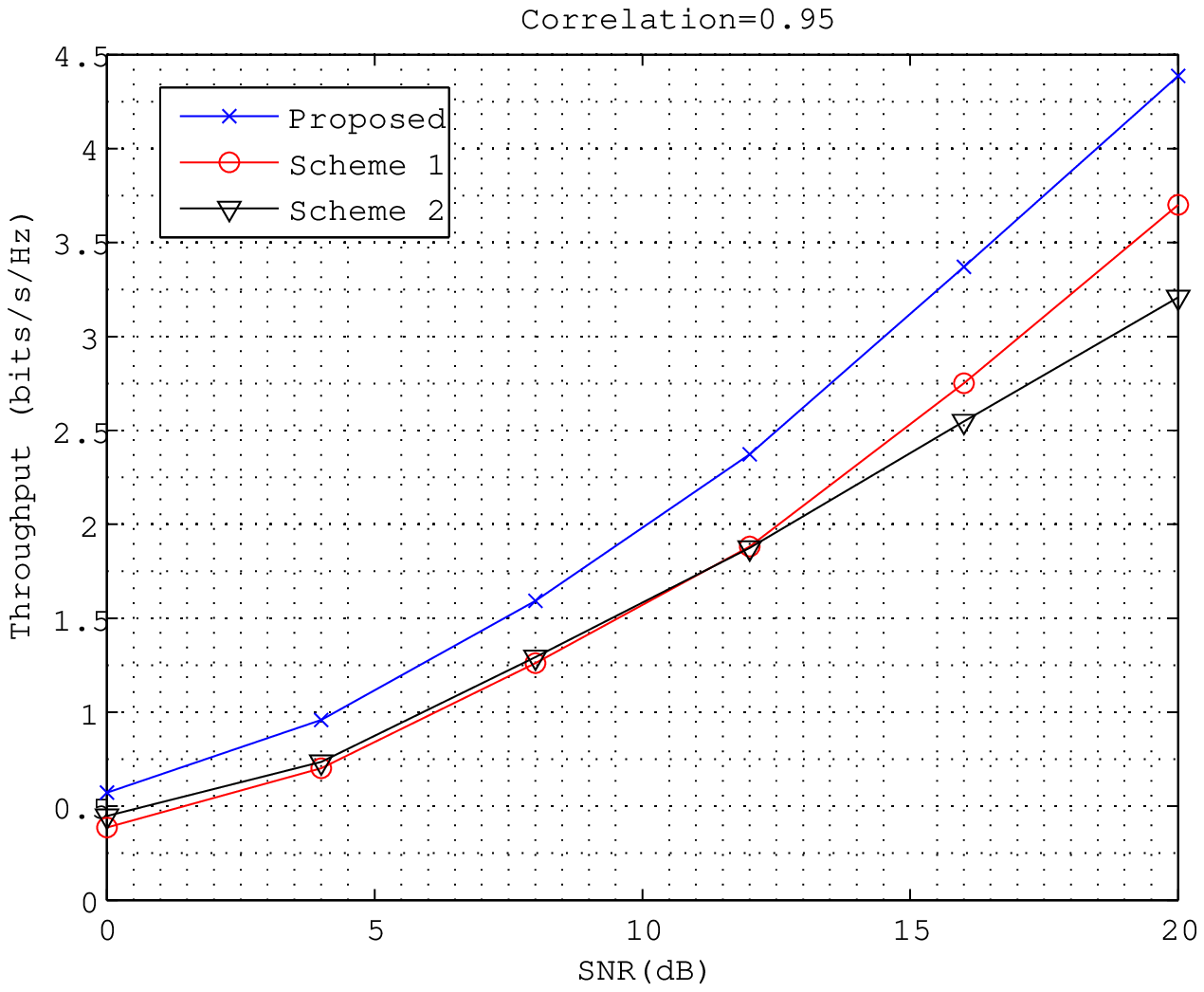}}
  \caption{Evaluated throughput for various correlations}
  \label{fig:tp}
\end{figure}
It can be seen that the proposed scheme shows almost identical performance to Scheme 1 which ignores observation with low correlation as expected. As correlation increases, the performance gap between these two gets larger. Interestingly, Scheme 2 performs very poorly with low correlation, and it does not outperform Scheme 1 even with $C=0.95$. This is probably because of the restriction of one observation, and it can be different with more observations. It is clear, however, that the proposed scheme performs significantly better than a scheme which tries to predict current channel. Even with more observations, the proposed scheme should benefit from it as well, and hence, performance gap would likely remain. 
\subsection{Evaluation of ergodicity}
\label{sec:erg}
Although the proposed scheme outperforms the conventional ones as described in Figure~\ref{fig:tp}, channel correlation needs to be sufficiently high for that to happen. If the proposed scheme does not achieve the optimal performance, then there may be a room for improvement. As mentioned earlier, the proposed scheme is optimal if the channel has ergodic property. Consider the code block $i$'s which have the same value of $|h_{i-l_d}|$'s. Optimality of the proposed scheme is determined by whether the actual throughput of such code blocks converges to $\max_R TP_i(R)$. Let $I_A(\cdot)$ is an indicator function. In other words, $I_A(x)=1$ if $x \in A$ and $I_A(x)=0$ otherwise. We can think of $RI_{ \{h_i: \log_2 (1+P|h_i|^2)\geq R \} }(h_i)$ as a random variable for $i$'s with $|h_{i-l_d}|=x$ where $x\geq 0$. Then, the statistical average of this random variable would be $TP_i(R)$ given $|h_{i-l_d}|=x$. We want this statistical average to be equal to the limit of empirical average given as $\lim_{T \rightarrow \infty}\frac{R\sum_{i=l_d+1}^T I_{ \{h_i: \log_2 (1+P|h_i|^2)\geq R \} }(h_i)I_{\{h_{i-l_d}:|h_{i-l_d}|=x \}}(h_{i-l_d})}{\sum_{i=l_d+1}^T I_{\{h_{i-l_d}:|h_{i-l_d}|=x \}}(h_{i-l_d})}$. The following is the formal definition of this ergodicity. 
\begin{definition}
A Rayleigh fading process $\{h_i\}_{i=1}^T$ where $h_i$'s are jointly Gaussian is called weakly TP-ergodic if it satisfies the following for $R, x\geq 0$ with probability 1.
\begin{equation}
TP_i(R)=\lim_{T \rightarrow \infty}\frac{R\sum_{i=l_d+1}^T I_{ \{h_i: \log_2 (1+P|h_i|^2)\geq R \} }(h_i)I_{\{h_{i-l_d}:|h_{i-l_d}|=x \}}(h_{i-l_d})}{\sum_{i=l_d+1}^T I_{\{h_{i-l_d}:|h_{i-l_d}|=x \}}(h_{i-l_d})},
\end{equation}
where $TP_i(R)$ is computed for $|h_{i-l_d}|=x$.
\end{definition}
We can also think of pseudo ergodicity which is closely related to weak TP-ergodicity and has better operational implication.
\begin{definition}
A Rayleigh fading process $\{h_i\}_{i=1}^T$ where $h_i$'s are jointly Gaussian is called pseudo weakly TP-ergodic if it satisfies the following for $R\geq 0$ with probability 1.
\begin{equation}
\lim_{T \rightarrow \infty} \frac{\sum_{i=l_d+1}^T TP_{i}(R)}{T}=\lim_{T \rightarrow \infty}\frac{R}{T}\sum_{i=l_d+1}^T I_{ \{h_i: \log_2 (1+P|h_i|^2)\geq R \} }(h_i).
\end{equation}
\end{definition}
In fact, the proposed scheme is optimal as long as it satisfies the above pseudo weak TP-ergodicity. Let us now look at the relationship between two ergodicity conditions. 
\begin{lemma}
Weak TP-ergodicity implies pseudo weak TP-ergodicity if 
\begin{eqnarray}
\label{eq:lim}
&&\lim_{T \rightarrow \infty}\frac{R\sum_{i=l_d+1}^T I_{ \{h_i: \log_2 (1+P|h_i|^2)\geq R \} }(h_i)I_{\{h_{i-l_d}:|h_{i-l_d}|=x \}}(h_{i-l_d})}{\sum_{i=l_d+1}^T I_{\{h_{i-l_d}:|h_{i-l_d}|=x \}}(h_{i-l_d})}\nonumber\\
&&\quad = \frac{\lim_{T \rightarrow \infty}\frac{R}{T}\sum_{i=l_d+1}^T I_{ \{h_i: \log_2 (1+P|h_i|^2)\geq R \} }(h_i)I_{\{h_{i-l_d}:|h_{i-l_d}|=x \}}(h_{i-l_d})}{\lim_{T \rightarrow \infty} \frac{1}{T}\sum_{i=l_d+1}^T I_{\{h_{i-l_d}:|h_{i-l_d}|=x \}}(h_{i-l_d})}.
\end{eqnarray}
\end{lemma}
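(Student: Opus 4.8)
The plan is to verify the pseudo weak TP-ergodicity identity by splitting each of its two Cesàro averages according to the value $x$ taken by the conditioning variable $|h_{i-l_d}|$, and then to show that both sides collapse to the same $x$-aggregate. The hypothesis \eqref{eq:lim} is the bridge: together with the weak TP-ergodicity assumption it lets me replace the empirical conditional throughput (a single limit of a ratio) by the ratio of two separate limits, which is exactly what is needed to recombine the pieces.

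First I would fix notation. Write $A_i(R)=I_{\{\log_2(1+P|h_i|^2)\geq R\}}(h_i)$ for the no-outage indicator and $B_i(x)=I_{\{|h_{i-l_d}|=x\}}(h_{i-l_d})$ for the conditioning indicator, and set
\begin{equation}
f(x)=\lim_{T\to\infty}\frac{1}{T}\sum_{i=l_d+1}^T B_i(x),\qquad g(R,x)=\lim_{T\to\infty}\frac{R}{T}\sum_{i=l_d+1}^T A_i(R)B_i(x).
\end{equation}
With this notation the right-hand side of \eqref{eq:lim} is precisely $g(R,x)/f(x)$, while its left-hand side is the empirical conditional average that weak TP-ergodicity equates to $TP_i(R)$ evaluated at $|h_{i-l_d}|=x$; I will write $TP(R;x)$ for the latter, recalling that $TP_i(R)$ depends on $h_{i-l_d}$ only through its modulus. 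Combining these two facts yields the pointwise identity $g(R,x)=TP(R;x)f(x)$, which is the only consequence of the hypotheses I will use.

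Next I would evaluate each side of the pseudo weak TP-ergodicity statement. For each fixed $i$ the observation $|h_{i-l_d}|$ equals exactly one value, so $\sum_x B_i(x)=1$, giving $A_i(R)=\sum_x A_i(R)B_i(x)$ and $TP_i(R)=\sum_x TP(R;x)B_i(x)$. Substituting these decompositions and interchanging $\lim_{T\to\infty}$ with the aggregation over $x$, the right-hand side becomes $\lim_{T\to\infty}\frac{R}{T}\sum_i A_i(R)=\sum_x g(R,x)$ and the left-hand side becomes $\lim_{T\to\infty}\frac{1}{T}\sum_i TP_i(R)=\sum_x TP(R;x)f(x)$. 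The pointwise identity from the previous step makes these two $x$-aggregates identical, which is the conclusion.

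The main obstacle is justifying the interchange of the limit in $T$ with the aggregation over the continuum of observation values $x$. Because $|h_{i-l_d}|$ is continuously distributed, each event $\{|h_{i-l_d}|=x\}$ has probability zero and the ``sum over $x$'' is really an integral, so $B_i(x)$, $f(x)$, and $g(R,x)$ must be understood through a binning or density limit rather than literally. Making the interchange rigorous would require a uniform-integrability or dominated-convergence control, or equivalently a uniform-in-$x$ bound on the convergence of the empirical averages. I would handle this by carrying out the argument first on a finite partition of the range of $|h_{i-l_d}|$, where the aggregate is a finite sum and the interchange is automatic, and only then refining the partition to recover the density formulation; the remainder of the proof is routine bookkeeping of a Cesàro average split by the value of its conditioning variable.
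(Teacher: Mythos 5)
Your proposal is correct and follows essentially the same route as the paper's proof: use the hypothesis \eqref{eq:lim} to turn the conditional weak-ergodic identity into the pointwise product form $g(R,x)=TP(R;x)f(x)$, aggregate over $x$, and interchange the $T$-limit with the $x$-aggregation. The paper performs this last step by integrating over $x$ and invoking the dominated convergence theorem (both integrands are bounded by $R$), which is exactly the justification your binning/dominated-convergence remark supplies; your explicit caveat that the events $\{|h_{i-l_d}|=x\}$ have probability zero is a fair point the paper glosses over, but it does not change the argument.
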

\begin{proof}
Assume that weak TP-ergodicity holds. If the assumption~\eqref{eq:lim} holds, then weak TP-ergodicity can be written as
\begin{eqnarray}
&&\lim_{T \rightarrow \infty}\frac{1}{T}\sum_{i=l_d+1}^T TP_i(R)I_{\{h_{i-l_d}:|h_{i-l_d}|=x \}}(h_{i-l_d})\nonumber\\
&&\quad =\lim_{T \rightarrow \infty}\frac{R}{T}\sum_{i=l_d+1}^T I_{ \{h_i: \log_2 (1+P|h_i|^2)\geq R \} }(h_i)I_{\{h_{i-l_d}:|h_{i-l_d}|=x \}}(h_{i-l_d}).
\end{eqnarray}
Consequently, we get 
\begin{eqnarray}
&&\int_x \lim_{T \rightarrow \infty}\frac{1}{T}\sum_{i=l_d+1}^T TP_i(R)I_{\{h_{i-l_d}:|h_{i-l_d}|=x \}}(h_{i-l_d}) dx\nonumber\\
&&\quad =\int_x \lim_{T \rightarrow \infty}\frac{R}{T}\sum_{i=l_d+1}^T I_{ \{h_i: \log_2 (1+P|h_i|^2)\geq R \} }(h_i)I_{\{h_{i-l_d}:|h_{i-l_d}|=x \}}(h_{i-l_d})dx.
\end{eqnarray}
Since $\frac{1}{T}\sum_{i=l_d+1}^T TP_i(R)I_{\{h_{i-l_d}:|h_{i-l_d}|=x \}}(h_{i-l_d})$ and $\frac{R}{T}\sum_{i=l_d+1}^T I_{ \{h_i: \log_2 (1+P|h_i|^2)\geq R \} }(h_i)I_{\{h_{i-l_d}:|h_{i-l_d}|=x \}}(h_{i-l_d})$ are upper bounded by $R$, limit and integration can be exchanged from dominated convergence theorem~\cite{Fo99}. Then, we have
\begin{equation}
\lim_{T \rightarrow \infty} \frac{\sum_{i=l_d+1}^T TP_{i}(R)}{T}=\lim_{T \rightarrow \infty}\frac{R}{T}\sum_{i=l_d+1}^T I_{ \{h_i: \log_2 (1+P|h_i|^2)\geq R \} }(h_i),
\end{equation}
which proves the claim.
\end{proof}
Pseudo weak TP-ergodicity can be checked numerically by comparing the empirical average of $\max_R TP_i(R)$ (here, the observation $|h_{i-l_d}|$ is random) with the actual throughput. We can also think of stronger ergodicity as follows. 
\begin{definition}
A Rayleigh fading process $\{h_i\}_{i=1}^T$ where $h_i$'s are jointly Gaussian is called strongly TP-ergodic if it satisfies the following for all $R$ with probability 1.
\begin{equation}
E_{h_{i-l_d}}\Big[TP_{i}(R)\Big]=\lim_{T \rightarrow \infty}\frac{R}{T}\sum_{i=l_d+1}^T I_{ \{h_i: \log_2 (1+P|h_i|^2)\geq R \} }(h_i).
\end{equation}
\end{definition}
The strong TP-ergodicity says that the actual throughput converges to statistical average of $\max_R TP_i(R)$ with respect to marginal distribution of $h_{i}$. Strong TP-ergodicity requires $E_{h_{i-l_d}}[TP_{i}(R)]=\lim_{T \rightarrow \infty} \frac{\sum_{i=l_d+1}^T TP_{i}(R)}{T}$ in addition to pseudo weak TP-ergodicity. Note that this strong TP-ergodicity does not need to hold for the proposed scheme to be optimal. Figure~\ref{fig:erg} describes the actual throughput, the empirical average of $\max_R TP_i(R)$, and the statistical average of $\max_R TP_i(R)$ with various correlations using $10^4$ channel realizations with the model in~\eqref{eq:simch}. 
\begin{figure}[ht]
  \subfigure[$C=0.1$]{\label{fig:1erg}\includegraphics[width=0.5\textwidth]{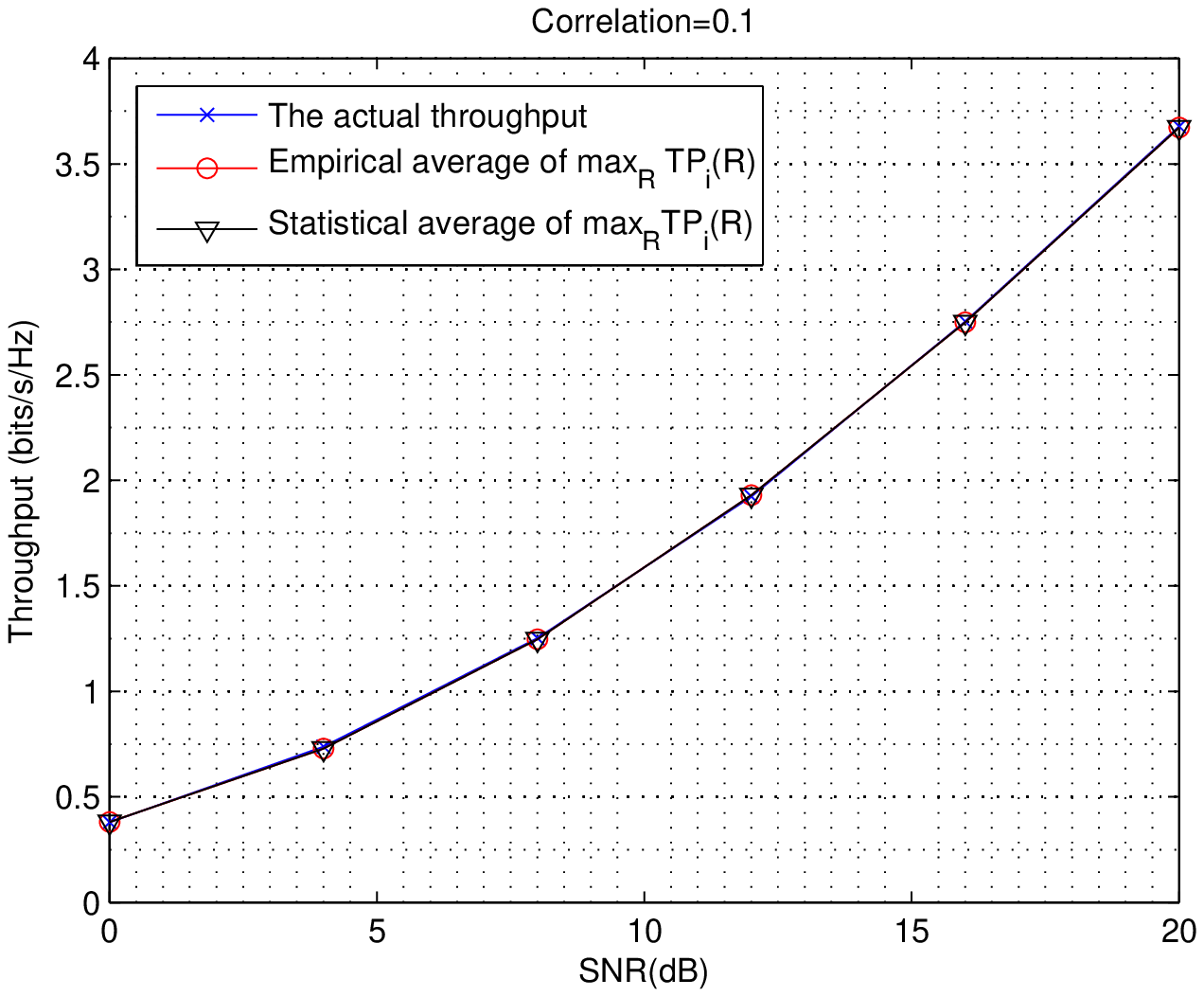}} 
  \subfigure[$C=0.7$]{\label{fig:7erg}\includegraphics[width=0.5\textwidth]{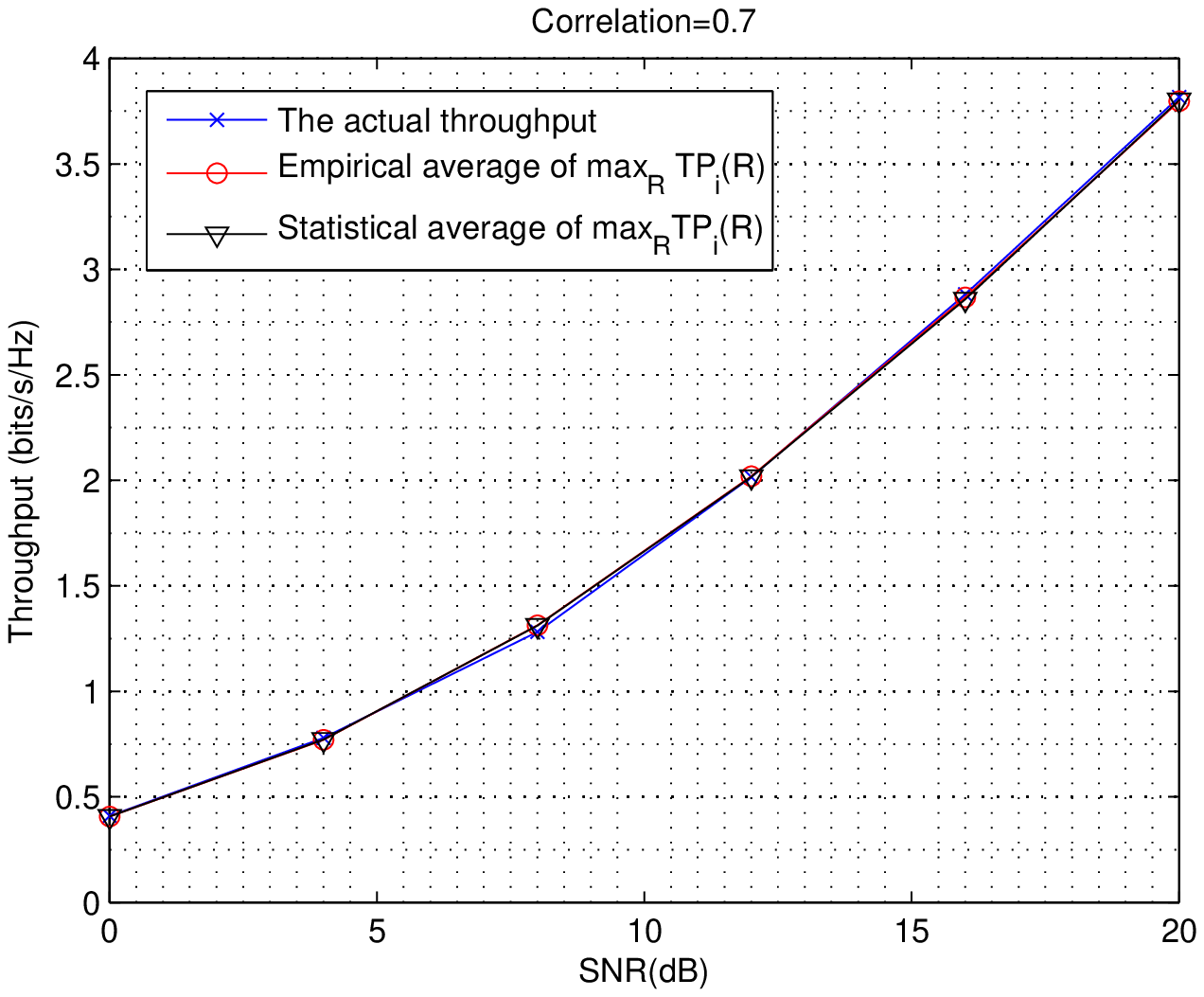}}
  \subfigure[$C=0.9$]{\label{fig:9erg}\includegraphics[width=0.5\textwidth]{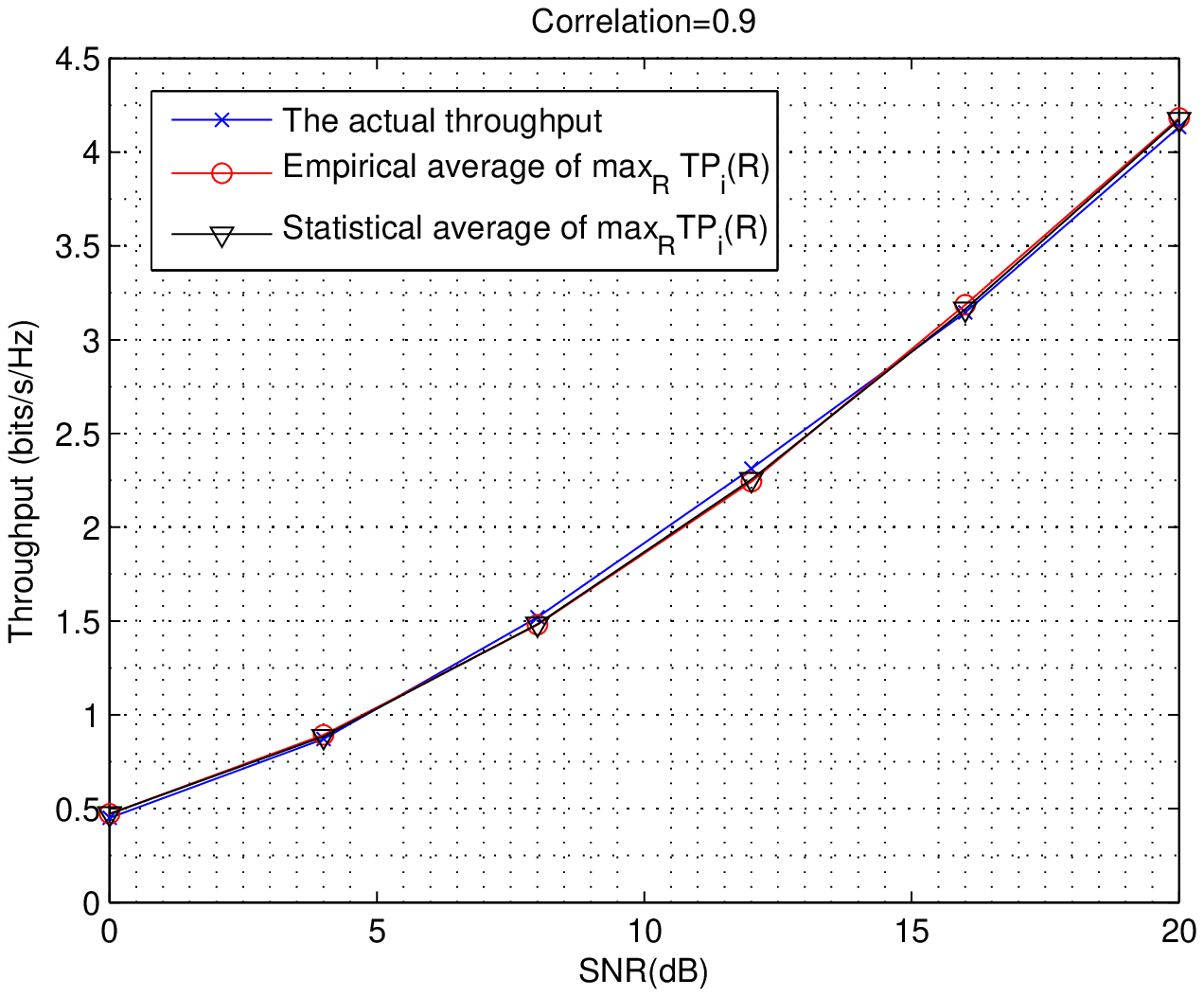}}
  \subfigure[$C=0.95$]{\label{fig:95erg}\includegraphics[width=0.5\textwidth]{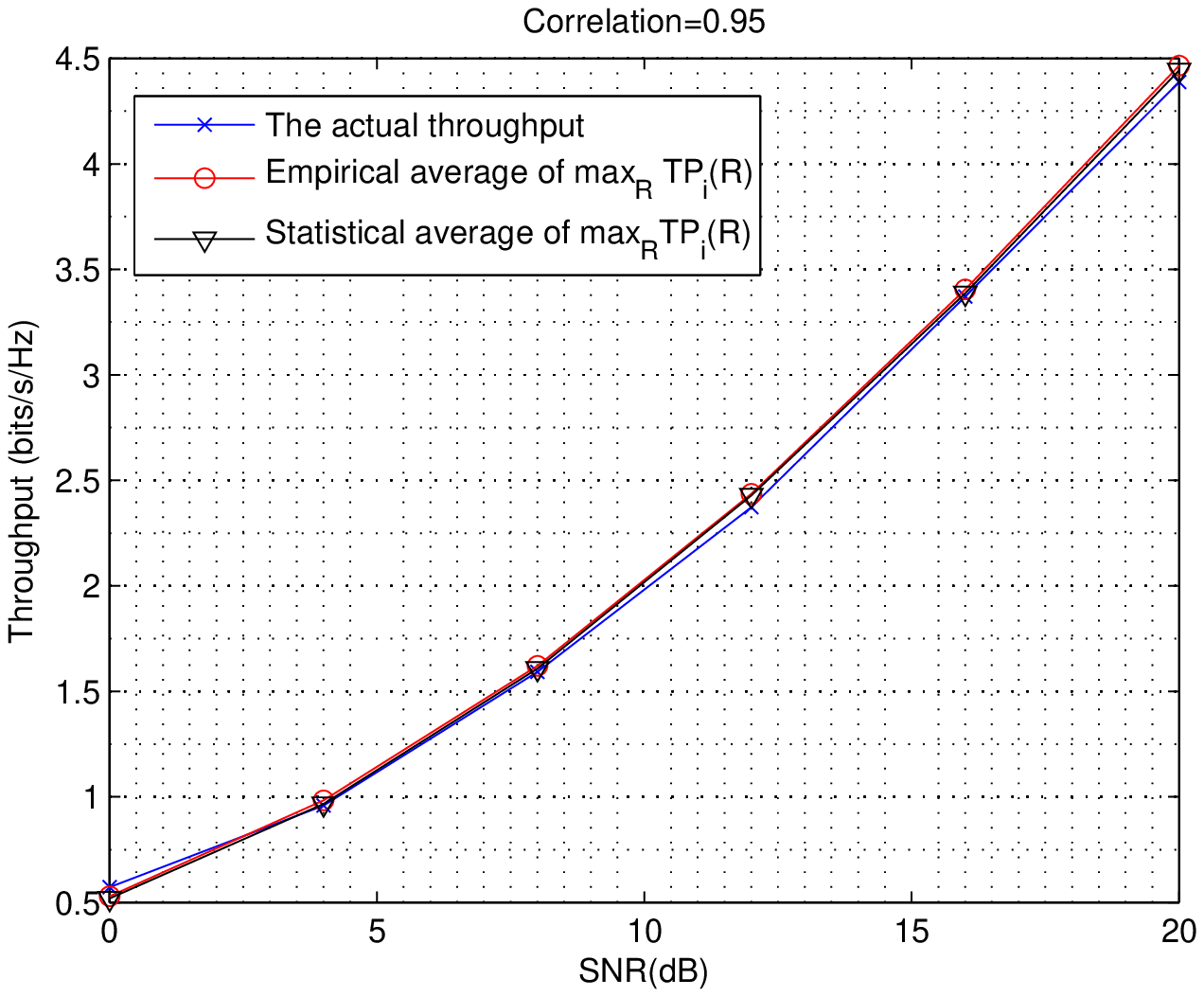}}
  \caption{Evaluation of ergodicity for various correlations}
  \label{fig:erg}
\end{figure}
It can be seen that they are very close for all channel correlations, which hints that the both strong and pseudo weak TP-ergodicity conditions hold, and that the proposed scheme is near optimal in terms of long-term throughput. 
\section{Generalization of the proposed scheme}
\label{sec:ext}
The proposed scheme considered in Section~\ref{sec:out} is defined only with SISO block flat fading model given in~\eqref{eq:ch}. As mentioned earlier, it can be generalized to more complicated cases, and this section discusses such generalization. 
\subsection{SIMO}
\label{sec:simo}
Consider the following single-input multi-output (SIMO) block flat fading model. 
\begin{equation}
\underline{y}_i[n] = \sqrt{P}\underline{h}_ix_i[n] + \underline{z}_i[n] \text{ for } n = 1,...,N,
\end{equation}
where $\underline{h}_i$ is a vector of length $M$ where $M$ is the number of receive antennas. There are $M$ elements in $\underline{h}_i$, and they are independent with each other. Also, $M$ elements of $\underline{z}_i[n]$ are independent with each other. Let $h_{i,m}$ be the $m$th element of $\underline{h}_i$, and $z_{i,m}[n]$ be the $m$th element of $\underline{z}_i[n]$. Similar to the SISO case, we assume that background noise $z_{i,m}[n] \sim \mathcal{CN}(0,1)$ is iid. $\{h_{i,m}\}_{i=1}^\infty$ is Rayleigh fading process which means that $h_{i,m}\sim  \mathcal{CN}(0,1)$, and $P$ represents SNR. We assume that $\{h_{i,m}\}_{i=1}^\infty$ is jointly Gaussian. The proposed scheme can directly be extended to this case. 
The capacity of the $i$th code block $C_{i}$ can be expressed as
\begin{equation}
C_i= \log_2 (1+P\|\underline{h}_{i}\|^2 ).
\end{equation}
With restriction of single observation, let $C=E[h_{i,m}h^*_{i-l_d,m}]$. Given $h_{i-l_d,m}$,
\begin{equation}
h_{i,m} \sim \mathcal{CN} (Ch_{i-l_d,m}, 1-C^2).
\end{equation}
The outage probability $P_{out}(R)$ of the $i$th code block for given $R$ can be expressed as 
\begin{subequations}
\begin{eqnarray}
P_{i,out}(R)&=& P \{ C_{i}<R | \underline{h}_{i-l_d} \}\\
          &=& P\Big\{\|h_{i}\|^2<\frac{2^R-1}{P}\Big| \underline{h}_{i-l_d}\Big\}.
\end{eqnarray}
\end{subequations}
It can be seen that $\|\underline{h}_{i}\|^2$ given $\underline{h}_{i-l_d}$ has noncentral chi-square distribution with $2M$ degrees of freedom, whose cdf is given as 
\begin{equation}
P\{|\underline{h}_{i}\|^2<x|\underline{h}_{i-l_d} \}=1-Q_{M}\bigg(  \sqrt{\frac{2C\|\underline{h}_{i-l_d}\|^2}{1-C^2}}, \sqrt{\frac{2x}{1-C^2}}\bigg),
\end{equation}
where $Q_{M}$ is a Marcum Q-function defined as
\begin{equation}
Q_M(a,b)=\int_b^\infty \frac{x^{M}}{a^{M-1}}\exp\Big(-\frac{x^2+a^2}{2}\Big)I_{M-1}(ax)dx,
\end{equation}
where $I_{M-1}$ is a modified Bessel function of order $M-1$. Therefore, the expected throughput $TP_i(R)$ of the $i$th code block for given $R$ as
\begin{subequations}
\begin{eqnarray}
TP_i(R)&=&R(1-P_{i,out}(R))\\
       &=&RQ_M\bigg(  \sqrt{\frac{2C\|\underline{h}_{i-l_d}\|^2}{1-C^2}}, \sqrt{\frac{2(2^R-1)}{P(1-C^2)}}\bigg).
\end{eqnarray}
\end{subequations}
Then, the proposed scheme is to choose $R$ which is the solution of $\max_R TP_i(R)$ as the transmission data rate of the $i$th code block. Similar to SISO case, the closed form solution cannot be found, which means that determination of search interval as in Theorem~\ref{thm:int} must be done to implement the scheme. As for $Q_1(\cdot)$, there are several results which consider bounds on general Marcum Q-function $Q_M(\cdot)$, and they may be used to determine such interval. 
\subsection{Fading within a code block}
\label{sec:fad1}
Consider now the following SISO flat fading model. 
\begin{equation}
y_i[n] = \sqrt{P}h_i[n]x_i[n] + z_i[n] \text{ for } n = 1,...,N,
\end{equation}
We assume that background noise $z_i[n] \sim \mathcal{CN}(0,1)$ is iid. $\{h_i[1],...,h_i[N]\}_{i=1}^\infty$ is Rayleigh fading process which means that $h_i[n]\sim  \mathcal{CN}(0,1)$, and $P$ represents SNR. We assume that $\{h_i[1],...,h_i[N]\}_{i=1}^\infty$ is jointly Gaussian. The capacity $C_i$ of the $i$th code block can be calculated as 
\begin{equation}
C_i=\max_{P(X^N)}\lim_{N \rightarrow \infty}\frac{1}{N}I(X^N;Y^N).
\end{equation}
Note that
\begin{subequations}
\begin{eqnarray}
I(X^N;Y^N)&=&h(Y^N)-h(Y^N|X^N)\\
          &\stackrel{(a)}{=}&h(Y^N)-\sum_{n=1}^N h(Y_n|X_n)\\
          &\leq& \sum_{n=1}^N h(Y_n)-\sum_{n=1}^N h(Y_n|X_n)\\
          &=& \sum_{n=1}^N I(X_n;Y_n),
\end{eqnarray}
\end{subequations}
where (a) comes from the fact that background noise is iid. Therefore, 
\begin{equation}
C_i\leq \lim_{N \rightarrow \infty}\frac{1}{N} \sum_{n=1}^N  \log_2 (1+P|h_{i}[n]|^2 ).
\end{equation}
The equality of the above expression is achieved when $y[n]$ is independent, and this depends on the joint statistics of the channel. Since it would be difficult to derive the exact capacity $C_i$ in this case, we consider the upper bound $\frac{1}{N} \sum_{n=1}^N  \log_2 (1+P|h_{i}[n]|^2 )$ as the approximation of the capacity for finite $N$. The conditional probability of $h_i[1],...,h_i[N]$ given observation can be expressed by using Lemma~\ref{lem:gau} as in Section~\ref{sec:al} given that correlation among all involved random variables is specified. The expression of the outage probability and the expected throughput, however, would be much more complicated than those in Seciton~\ref{sec:al} or Section~\ref{sec:simo}. Hence, we consider furhter simplified expression as follows. 
\begin{equation}
\label{eq:simpc}
C_i \approx \log_2 \Big(1+\frac{P}{N}\sum_{n=1}^N|h_{i}[n]|^2\Big ).
\end{equation}        
Note that we are further upper bounding the capacity by doing this because of concavity of log function, which implies that using this will underestimates outage probability. Let us assume again a single code block of observation. Let $(h_i[1],...,h_i[N])^T$ be $\underline{h}_i$. We first interested in the distribution of $\underline{h}_{i}$ given $\underline{h}_{i-l_d}$. By using Lemma~\ref{lem:gau},
\begin{equation}
\underline{h}_{i} \sim \mathcal{CN} (\tilde{A}\underline{h}_{i-l_d}, \tilde{C}_3-\tilde{A}\tilde{C}_1^H),
\end{equation}
where $\tilde{A}=\tilde{C}_1\tilde{C}_2^{-1}$, $\tilde{C}_1=E[\underline{h}_{i}\underline{h}_{i-l_d}^H]$, $\tilde{C}_2=E[\underline{h}_{i-l_d}\underline{h}_{i-l_d}^H]$, and $\tilde{C}_3=E[\underline{h}_{i}\underline{h}_{i}^H]$. Since $\tilde{C}_3-\tilde{A}\tilde{C}_1^H$ may not be a diagonal matrix, $\sum_{n=1}^{N}|h_{i}[n]|^2$ given $\underline{h}_{i-l_d}$ has generalized chi-squared distribution which has no known closed functional form. If we assume that $\tilde{C}_3-\tilde{A}\tilde{C}_1^H$ is diagonal, then $\sum_{n=1}^{N}|h_{i}[n]|^2$ given $\underline{h}_{i-l_d}$ has noncentral chi-squared distribution with $2N$ degrees of freedom, and the resulting outage probability and the expected throughput can be derived as in Section~\ref{sec:simo}.
\subsection{HARQ}
The current wireless standard supports hybrid automatic repeat request (HARQ) to improve reliability of transmission. Under HARQ, the received signal at each transmission is not discarded even with incorrect decoding, and the retransmitted signal is combined with stored previous signals. There are two prevalent combining methods called Chase combining (CC) and incremental redundancy (IR). Detailed description of them can be found in~\cite{Ch85} for CC and in~\cite{LiCoMi84} for IR.\\
\indent
Let us consider a single retransmission for simplicity. We also consider SISO block flat fading model given in~\eqref{eq:ch} with a single code block observation. The transmitter determines the transmission data rate of code block $i$ with $h_{i-l_d}$. Assume that possible retransmission occurs at $(i+l_r)$th code block. For the both combining schemes, the outage event of the first transmission is the event in which the transmission data rate $R$ is greater than the channel capacity $C_i$. For IR, the outage event of the second transmission is the event in which the transmission data rate $R$ is greater than the sum of capacities of two channels $C_i+C_{i+l_r}$. From now on, we will restrict ourselves to IR. \\
\indent
Let $C=E[h_{i}h^*_{i-l_d}]$, $C'=E[h_{i}h^*_{i+l_r}]$ and $C''=E[h_{i-l_d}h^*_{i+l_r}]$. Let $P^1_{i,out}$ be outage probability of the first transmission. Then, it is given as before,
\begin{equation}
P^1_{i,out}(R)=1-Q_1\bigg(  \sqrt{\frac{2C|h_{i-l_d}|^2}{1-C^2}}, \sqrt{\frac{2(2^R-1)}{P(1-C^2)}}\bigg).
\end{equation}
Let $P^2_{i,out}$ be outage probability of the second transmission. To compute it, we need to know the distribution of $|h_{i+l_r}|^2$ given $h_{i}$ and $h_{i-l_d}$. From Lemma~\ref{lem:gau}, given $h_{i}, h_{i-l_d}$,
\begin{equation}
h_{i+l_r} \sim \mathcal{CN} (A(h_{i}, h_{i-l_d})^T, 1-AC_1^H),
\end{equation}
where $A=C_1C_2^{-1}$, $C_1=E[h_{i+l_r}(h^*_{i}, h^*_{i-l_d})]=(C', C'')$, and $C_2=E[(h_{i}, h_{i-l_d})^T(h^*_{i}, h^*_{i-l_d})]=\begin{bmatrix}
1 & C\\
C & 1
\end{bmatrix}$. 
Hence, $|h_{i+l_r}|^2$ given $h_{i}, h_{i-l_d}$ has noncentral chi-squared distribution with 2 degrees of freedom, and its cdf is given as 
\begin{equation}
\label{eq:cdf2}
P\{|h_{i+l_r}|^2<x|h_{i}, h_{i-l_d} \}=1-Q_1\bigg(  \sqrt{\frac{2|A(h_{i}, h_{i-l_d})^T|^2}{1-AC_1^H}}, \sqrt{\frac{2x}{1-AC_1^H}}\bigg).
\end{equation}
Then, $P^2_{i,out}$ is given as 
\begin{equation}
P^2_{i,out}(R)=E_{h_i|h_{i-l_d}}\bigg[ I(C_{i}<R) \bigg(1-Q_1\bigg(  \sqrt{\frac{2|A(h_{i}, h_{i-l_d})^T|^2}{1-AC_1^H}}, \sqrt{\frac{2(2^{R-C_{i}}-1)}{P(1-AC_1^H)}}\bigg)\bigg)\bigg ],
\end{equation}
where $I(\cdot)$ is indicator function. Without HARQ, the expected throughput can clearly be expressed as $TP_i(R)=R(1-P_{i,out}(R))$ as given in~\eqref{eq:tp}. With HARQ, however, there is no simple expression for the expected throughput. For example, let us consider the following expression as the expected throughput $TP_{i,HARQ}(R)$.
\begin{equation}
TP_{i,HARQ}(R)=R(1-P^1_{i,out}(R))+\frac{R}{2}(1-P^2_{i,out}(R)).
\end{equation}
The above expression is incorrect because it does not have consideration of the throughput at the $(i+l_r)$th code block with the first transmission being successful. Since the rate of the $(i+l_r)$th code block is not determined at the $i$th code block, i.e., it is the future decision, the rate allocation problem becomes the stochastic control problem with infinite horizon which is usually solved by dynamic programming.~\cite{Be05}. Solvability of such problem needs to be carefully investigated, and it will not be discussed here since its scope becomes out of this paper.\\
\indent
Lack of simple throughput expression means that the proposed scheme can only be sub-optimal if one sticks with some simple, approximate throughput expression. This problem, however, is not unique to the proposed scheme. Even if we consider an approach of predicting the channel, we would still have exactly the same problem. In fact, the approach of predicting channel has another problem under HARQ. If we are interested in rate allocation with consideration of possible retransmission, then we must be able to somehow characterize $P^1_{i,out}$. With the approach of predicting channel, the notion of outage probability is unclear, which means that it would be hard to design such a scheme which works with any kind of HARQ throughput expression depending on $P^1_{i,out}$. Note that the proposed scheme at least has a systematic way of determining this outage probability, and this could be significant advantage under HARQ. 
\section{Conclusion}
\label{sec:con}
In this paper, we have proposed an algorithm which maximizes long-term throughput for ergodic fading channel with delayed CSIT. For performance evaluation, we have considered AR(1) model with Rayleigh fading, and the proposed scheme considerably outperforms conventional ones in this model. Ergodicity conditions which need to be satisfied to ensure optimality of the proposed scheme are discussed, and numerical evaluation shows near optimality of the proposed scheme for AR(1) model. As long as channel has ergodicity, the proposed scheme must be optimal in terms of long-term throughput for any Rayleigh fading channel in which each channel symbol is correlated with finite number of channel symbols. Such ergodicity is generally hard to prove, but it is often assumed for various wireless channels as mentioned earlier. \\
\indent
The proposed scheme can also be generalized to other than SISO block fading model. In SIMO case, the expected throughput involves generalized Marcum Q-function $Q_M(\cdot)$ which possibly needs bounding analysis as given for $Q_1(\cdot)$ in this paper. The proposed scheme can be applied to the case of fading within a code block as well, although inexistence of a closed from functional expression for cdf of current channel would make things more complicated. For the case of HARQ, the correct throughput expression must be determined before applying the proposed scheme, but it still can be used if one is interested in maintaining certain BLER instead of maximizing the throughput.\\
\indent 
Although we have exclusively considered Rayleigh fading model in this paper, the proposed scheme can be applied to any channel model given that joint channel statistics are known. It can be computationally difficult, however, to maximize the expected throughput over transmission rate $R$ for general channel model. In Rayleigh fading model, the expected throughput is described in terms of well known Marcum Q-function, and such maximization is done with the aid of previous studies in liteerature on Marcum Q-function.
 
\bibliographystyle{IEEEtran}
\bibliography{bae}

\end{document}